\newcommand{\kibitz}[2]{\ifnum\Comments=1{\color{#1}{#2}}\fi}
\titlespacing\section{0pt}{6pt plus 4pt minus 2pt}{6pt plus 2pt minus 2pt}
\titlespacing\subsection{0pt}{4pt plus 3pt minus 2pt}{4pt plus 2pt minus 2pt}
\titlespacing\subsubsection{0pt}{2pt plus 4pt minus 2pt}{2pt plus 2pt minus 2pt}
\definecolor{dg}{RGB}{2,101,15}
\newtheoremstyle{exampstyle}
  {4pt} 
  {4pt} 
  {\slshape} 
  {} 
  {\bfseries} 
  {.} 
  {.5em} 
  {} 
\theoremstyle{exampstyle}
\newtheorem{defin}{Definition}
\setlist[itemize]{label=$\cdot$}
\definecolor{Gray}{gray}{0.9}
\definecolor{LightCyan}{rgb}{0.88,1,1}
\newcolumntype{b}{>{\columncolor{LightCyan}}c}
\newcolumntype{d}{>{\columncolor{Apricot}}c}
 \newcommand{\code}[1]{\textcolor{Green}{\texttt{#1}}}
\newtheorem{theorem}{Theorem}[section]
\newtheorem{lemma}[theorem]{Lemma}
\def\@fnsymbol#1{\ensuremath{\ifcase#1\or \dagger\or \ddagger\or
   \mathsection\or \mathparagraph\or \|\or **\or \dagger\dagger
   \or \ddagger\ddagger \else\@ctrerr\fi}}
\title{\textbf{Low-cost attacks on Ethereum 2.0 by \\sub-1/3 stakeholders}}
\date{\small{ School of Engineering and Applied Sciences\\Harvard University}}
\author{
Michael Neuder\thanks{First published in the Game Theory in Blockchain (GTiB) workshop at the 2020 Conference on Web and Internet Economics (WINE).}\thanks{\href{mailto:michaelneuder@g.harvard.edu}{michaelneuder@g.harvard.edu}}
\and Daniel J. Moroz\thanks{\href{mailto:dmoroz@g.harvard.edu}{dmoroz@g.harvard.edu}} 
\and Rithvik Rao\thanks{\href{mailto:rithvikrao@college.harvard.edu}{rithvikrao@college.harvard.edu}}
\and David C. Parkes\thanks{\href{mailto:parkes@eecs.harvard.edu}{parkes@eecs.harvard.edu}}}
\begin{document}
\maketitle
\vspace{-2.5em}

\begin{abstract}
We outline two dishonest strategies that can be cheaply executed on the Ethereum 2.0 beacon chain, even by validators holding less than one-third of the total stake: malicious chain reorganizations (``reorgs'') and finality delays. In a malicious reorg, an attacker withholds their blocks and attestations before releasing them at an opportune time in order to force a chain reorganization, which they can take advantage of by double-spending or front-running transactions. 
To execute a finality delay an attacker uses delayed block releases and withholding of attestations to increase the mean and variance of the time it takes blocks to become finalized. This impacts the efficiency and predictability of the system.   
We provide a probabilistic and cost analysis for each of these attacks, considering a validator with 30\% of the total stake. 
\end{abstract}

\section{Introduction}
Ethereum 2.0 represents the most substantial modification to the Ethereum protocol since its initial release. The first step of Ethereum 2.0, dubbed \textit{Phase 0}, is the creation of the {\em beacon chain}, which will run in parallel to the original Ethereum chain. The beacon chain fully implements a Proof-of-Stake consensus protocol, which is presented by \cite{gasper} and defined in the \cite{eth2spec}. The Proof-of-Stake protocol  is a combination of Casper FFG ``friendly finality gadget'' from \cite{buterin2017casper} and a modified version of the GHOST fork-choice rule of \cite{ghost}.

The security of the beacon chain relies on no attacking party controlling more than $1/3$ of the total stake, where the $1/3$ security threshold comes from Byzantine fault tolerance guarantees \citep{castro1999practical}. Further, the beacon chain is designed to penalize provably dishonest behavior (e.g., proposing two conflicting blocks at the same block height) through the use of ``slashing conditions'' (rules that lead to a loss of staked currency)~\citep{gasper}.
In this paper we nevertheless outline two dishonest strategies that  can be used by an adversary with 30\% of the total stake. Neither strategy involves cryptographically signing conflicting pieces of information, and thus neither is punishable through slashing conditions. The attacks are relatively cheap to execute. While the market capitalization of Ethereum is over 55 billion USD as of November 2020,  the  finality delay attack would cost on the order of 500 USD and the chain reorganization attack would cost on the order of 5 USD.

We first examine small-scale malicious chain reorganizations (``reorgs''). Reorgs are caused by a fork-choice rule deeming a new branch to be dominant over an existing branch, effectively deleting blocks from the canonical chain (the fork which honest participants identify as the head based on a fork-choice rule).  Reorgs can occur naturally due to network latency. In Proof-of-Work, for example, if two miners create blocks $A$ and $B$ at nearly the same time, the network partitions, with some honest nodes mining on each block. If $A'$ is a subsequent block that is mined as a child of $A$ before any blocks are mined as a child of $B$ then the  honest network will see $A'$ as the head of the chain. Block $B$ will be orphaned, and no longer considered part of the main chain. For an honest miner who previously saw $B$ as the head of the chain, switching to $A'$ has the effect of deleting block $B$ from the canonical chain.

 \textit{Malicious reorgs} are artificial reorgs that are caused by an attacker who seeks to exploit them by attempting to double-spend or front-run large transactions. A double-spend transaction is one where coins are sent to a counter-party in return for something of value, with the chain subsequentely reorged to delete the original transaction. Minimizing the feasibility of double-spending is critical to the security of cryptoeconomic systems, and  the original Bitcoin white paper  presents a probabilistic analysis of the feasibility of malicious reorgs~\citep{nakamoto2008bitcoin}. A desirable property is that the probability of a malicious reorg  decreases as the length of the reorg grows. This provides that transactions included in blocks that have many blocks built on top of them will remain on the chain  with high probability. This also comes with a trade off, namely long wait times for transactions to be deemed valid,   limiting 
the utility of the economic system.

Malicious reorgs also enable front-running~\citep{daian2019flash}, which is the process of using information about large transactions to create short-term arbitrage opportunities. Typically, front-running in Ethereum happens in decentralized exchanges (DEXs). Attackers pay high gas fees to ensure their transactions are included first in a new block, in this way ensuring they come before large transactions that they want to front-run. Malicious reorgs can be used to gain  fine-grained control on the order of   transactions.  For example, consider an attacker who knows they can reorg an upcoming block, $B$, with a block of their own, $A$. If a large buy order from another party to an exchange, $t_\ell$, is made and included in $B$, the attacker knows that the price on the exchange will increase, and thus  $A$ can usefully include a transaction buying from that exchange, $t_b$, followed by this large transaction $t_\ell$, followed by a transaction selling to that exchange, $t_s$.
If the attacker were playing honestly, they would have already released block $A$ by the time $t_\ell$ is heard over the network.

In addition to malicious reorgs, we demonstrate a strategy in which an attacker can delay 
the {\em finality} of the beacon chain. Finality is a property of blocks, and \cite{buterin2017casper} show that once a block is finalized, the block and the transactions it contains will only be removed from the canonical chain if the network is ``$1/3$-slashable'' (i.e., if validators controlling $1/3$ of the total stake are provably dishonest). In a healthy network, a block will become finalized within two epochs, where an epoch is a sequence of block creation opportunities with total duration of around 6.4 minutes (defined more formally in Section~\ref{sec:forkchoice}). We demonstrate that an attacker with a 30\% stake can delay finality with high frequency; e.g., delaying finality by three epochs on average every hour (a delay of around 19 minutes). This is a denial-of-service (DoS) attack on any transaction that needs to be finalized, and doing this with some regularity could severely damage the health and predictability of the network.

Since this paper appeared in the Game Theory in Blockchain workshop of the 2020 Conference on Web and Internet Economics (WINE), the Ethereum Foundation has proposed a change to the fork-choice rule to be included as part of the first hard fork on the beacon chain, HF1 \citep{hf1}, citing the reorg attack presented in this paper as motivating the change.


\subsection{Related Work}

The analysis in this paper is related to the Proof-of-Work selfish mining literature initiated by \cite{eyal2018majority}, and further expanded in \cite{nayak2016stubborn,kwon2017selfish,sapirshtein2016optimal}. Selfish mining and double-spending were extended to Proof-of-Stake (PoS) by \cite{brown2019formal}, who demonstrated that longest-chain PoS protocols with certain properties are vulnerable to reorgs. \cite{neuder2019selfish,neuder2020defending} demonstrate instantiations of these reorg attacks in the Tezos protocol.

A number of attacks on the Ethereum 2.0 beacon chain have also been presented. \cite{neu2020ebb} describe an attack in which the attacker exploits the expected time of release of attestations to partition the honest network, which is a different method of executing the finality delay than
the one presented here. \cite{decoyflipflip} presents an attack called the ``decoy-flip-flop'' where an attacker accumulates and releases conflicting attestations to ensure that epochs do not become finalized. This differs from our finality delay attack  in that the attacker that we describe does not accumulate old epoch attestations or sign conflicting attestations. \cite{pullreq-decoy-eth} seeks to mitigate the ``decoy-flip-flop'' by checking that an attestation's target epoch is the current or previous epoch, thus limiting the ability of the attacker to stockpile old attestations. Neither of our attacks rely on sending attestations from old epochs and they are not resolved by this fix. \cite{bouncingattack} presents a ``bouncing attack'' where an attacker uses the fact that the fork-choice rule only operates on blocks since the last justified epoch (see Section~\ref{sec:subsec-fin} for definition of justification) to take advantage of network latency issues. \cite{pullreq-bounce-eth} aims to prevent this attack by ensuring that the last justified epoch does not change after the first two slots of an epoch. Neither of the attacks we present rely on changing the view of the last justified epoch, so this fix has no effect on them.

\section{Ethereum 2.0 Proof-of-Stake}\label{sec:ethereum}

The Ethereum 2.0 consensus mechanism is best understood in two parts: the {\em fork-choice rule} (HLMD-GHOST\footnote{Hybrid Latest Message Driven Greedy Heaviest Observed SubTree}) and the {\em finality gadget} (Casper FFG\footnote{Friendly Finality Gadget}), as described by \cite{gasper}. The fork-choice rule defines which blocks are considered to be on the canonical chain, while the finality gadget defines which of the produced blocks are \textit{justified} or \textit{finalized} (defined in Section~\ref{sec:subsec-fin}).

\subsection{Block Creation and the Fork-Choice Rule}\label{sec:forkchoice}
In the \cite{eth2spec}, time is partitioned into \textit{epochs}, which are further subdivided into 32 \textit{slots} of 12 seconds each. Each slot will contain a \textit{committee} with at least 128 validators. The first (randomly selected) validator in each committee takes the role of the \textit{proposer} for that slot, and can create a block. Epoch boundary blocks (EBBs)\footnote{These are defined as \textit{epoch-block pairs} in \cite{gasper} because they account for the possibility that an EBB may be used for multiple epochs, but for the scope of this work, just using EBBs is sufficient.} are checkpoint blocks that represent the start of an epoch based on the current set of blocks that a validator has observed.

\begin{defin}\label{def:ebb}
A validator determines the \underline{epoch boundary block} (EBB) for epoch $i$ as follows:
\begin{enumerate}[noitemsep,topsep=4pt]
    \item Choose the block at the first slot in epoch $i$ if it is present in the validator's canonical chain
    \item Else, choose the highest slot block on the canonical chain from epoch $i'$ such that $i' < i$. 
\end{enumerate}
\end{defin}
During each slot, all validators in the committee can issue a single \textit{attestation}. 
\begin{defin}\label{def:attestation}
An \underline{attestation} is the casting of a vote that contains:
\begin{itemize}[noitemsep,topsep=4pt]
    \item[$A_1$] $-$ a source EBB
    \item[$A_2$] $-$ a target EBB
    \item[$A_3$] $-$ the head of the chain according to HLMD-GHOST
\end{itemize}
Let $(A_1=\beta,A_2=\gamma,A_3=\delta)$ denote an attestation for blocks $\beta, \gamma$,  and $\delta$. 
\end{defin}
The $A_1$ and $A_2$ components of attestations are defined as the \textit{source} and \textit{target} EBBs of the Casper FFG vote. For example,  attestation $(A_1=0,A_2=32)$ can be interpreted as ``I want to move the finality gadget from the block at slot 0 to the block at slot 32''. We discuss how these votes contribute to the finalization process in Section~\ref{sec:subsec-fin}. For simplicity, we will sometimes elide part of an attestation when it is not relevant to the analysis; e.g., writing an attestation as $(A_3=\delta)$ when  the values of $A_1$ and $A_2$ are unimportant.

\begin{defin}\label{def:lmd-ghost-weight}
Let the \underline{weight} of a given block, B, be the sum of the stake owned by validators who attest (with $A_3$) to B or any descendants of B.
\end{defin}
\begin{defin}\label{def:lmd-ghost}
Based on the most recent set of attestations, one per validator, \underline{LMD-GHOST} \footnote{Latest Message Driven Greedy Heaviest Observed SubTree} dictates that the head of the chain can be reached by starting at the genesis block and choosing the higher weight branch at each fork until hitting a leaf block (a block with no children). 
\end{defin}

The hybrid version of Definition~\ref{def:lmd-ghost} from \cite{gasper} also adds that the start of the LMD-GHOST weight calculation is the last justified EBB (defined in Section~\ref{sec:subsec-fin}) rather than the genesis block, which improves the computational efficiency of the system. Our attacks are not impacted by this distinction because they do not rely on any attestations from before the current epoch.

For the remainder of this work, we  assume that each validator has equal stake, which we normalize to 1 without loss of generality. This allows us to count attestations instead of summing the weight over each attesting validator.
Figure~\ref{fig:ghost} demonstrates the use of HLMD-GHOST to pick the head of the canonical chain. Based on the results of HLMD-GHOST, an honest validator can choose which block will receive their $A_3$ vote.
In an  network with full participation, with honest validators, and with no network delays, the proposer releases a block at the start of the slot and each validator in that slot's committee votes for this block as the head of the chain.

\begin{figure}
    \centering
    \includegraphics[width=0.7\linewidth]{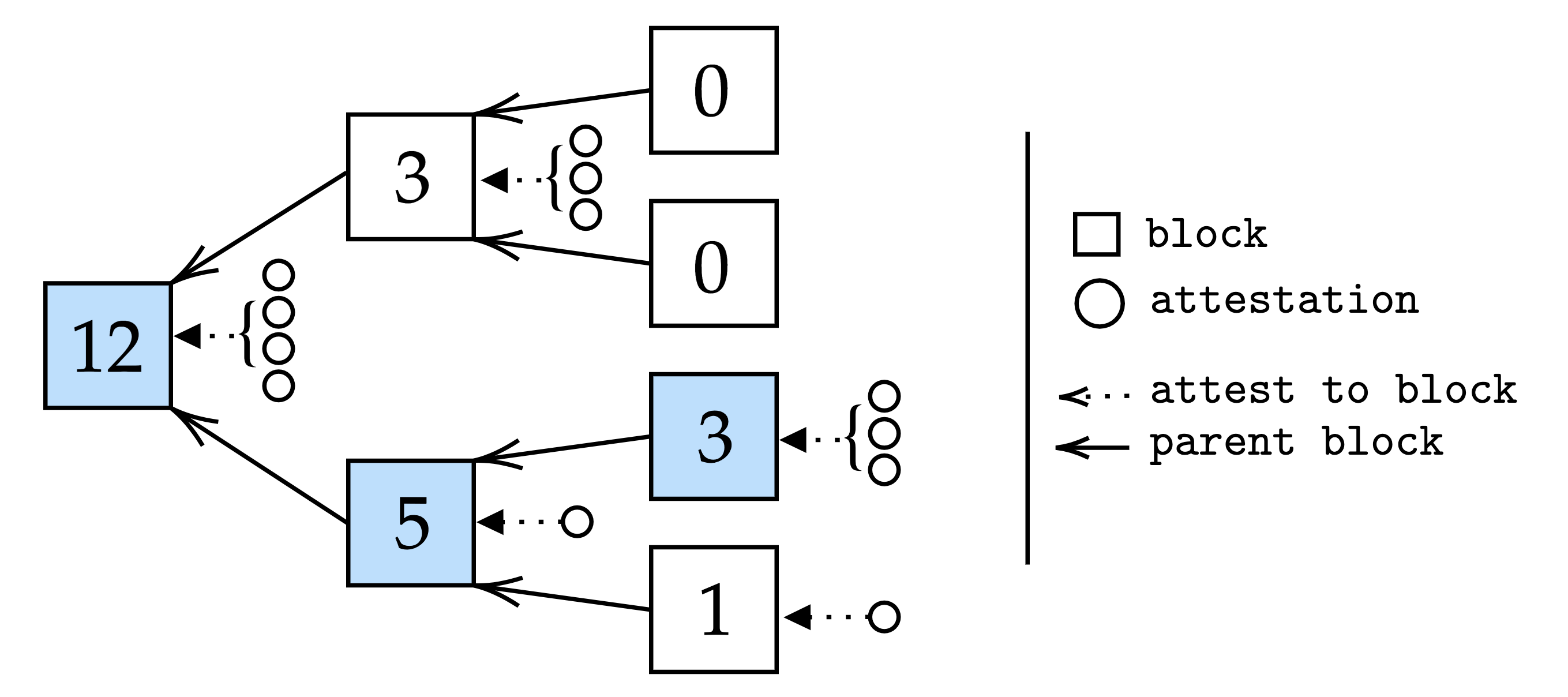}
    \caption{The results of running HLMD-GHOST to choose one of four conflicting leaf blocks. Each block is annotated with its weight, which is the number of attestations to itself or a descendant block. The blue blocks are the heaviest at each fork and thus part of the canonical chain. The blue block annotated with `3' is selected as the head of the chain.
    \label{fig:ghost}}
\end{figure}

\subsection{Finality Gadget}\label{sec:subsec-fin} 

Finality is achieved through the use of the source and target EBBs (the $A_1$ and $A_2$ components of attestations from Definition~\ref{def:attestation}).
\begin{defin}
There is a \underline{supermajority link between EBBs $\beta$ and $\gamma$} if validators controlling $2/3$ of the total stake release attestations with $(A_1=\beta,A_2=\gamma)$. 
\end{defin}

\begin{defin}\label{defin:justified}
The genesis (first) block is defined to be \underline{justified}. An EBB, $\gamma$, is \underline{justified} if there is a supermajority link between a previously justified EBB, $\beta$, and $\gamma$.
\end{defin}
%
\begin{figure}
    \centering
    \includegraphics[width=0.9\linewidth]{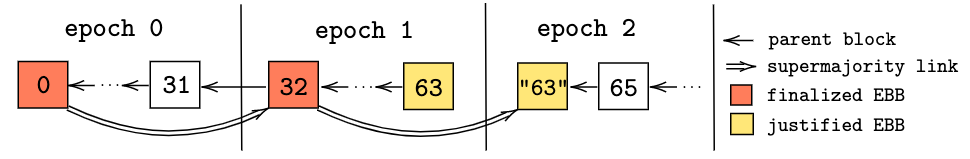}
    \caption{The process of justification and finalization for the first three epochs. In epoch 1, a supermajority link with $(A_1=0,A_2=32)$ justifies block 32. Since the EBB of epoch 2 was not produced as expected (there is no block 64), block 63 is ``borrowed'' from the previous epoch to fill in as the EBB for epoch 2. The supermajority link with $(A_1=32,A_2=63)$ finalizes block 32 and justifies block 63.
    \label{fig:justification}}
\end{figure}
\begin{defin}\label{defin:finalized}
A justified EBB, $\beta$, is \underline{finalized} if there is a supermajority link between $\beta$ and $\gamma$, and $\gamma$ is the justified EBB of the next epoch.\footnote{This is Case 4 of the ``Four-Case Finalization Rule'' in Section 8.5 of \cite{gasper}. Though the other 3 cases are implemented in the spec, they do not impact our  analysis  and are omitted for brevity.} 
\end{defin}
 
In other words, EBBs become justified by being the target of a supermajority link, and become finalized by being the source of a supermajority link from the previous epoch.
For example, consider the first three epochs pictured in Figure~\ref{fig:justification}. The genesis block at slot 0 is justified by definition. Over the course of epoch 1, validators controlling $2/3$ of the total stake issue attestations with $(A_1=0,A_2=32)$, which creates a supermajority link and justifies $32$. Subsequently in epoch 2, a supermajority link is established with $(A_1=32,A_2=63)$, which finalizes block 32 and justifies block 63. Block 63 is used as the EBB of epoch 2 because the expected EBB, block $64$, was not created (Definition~\ref{def:ebb}). 

Once an EBB is finalized, all the blocks in the corresponding epoch and any prior epochs are also finalized. \cite{buterin2017casper} demonstrate that these blocks will remain on the canonical chain permanently unless validators controlling at least $1/3$ of the total stake are provably dishonest.

\section{Malicious Reorgs}\label{sec:malreorgs}

In this section we present a dishonest strategy in which the attacker can force malicious reorgs by privately creating blocks and releasing their private fork at an opportune time. We analyze the case where the attack takes place within a single epoch.

\begin{figure}
    \centering
    \includegraphics[width=0.7\linewidth]{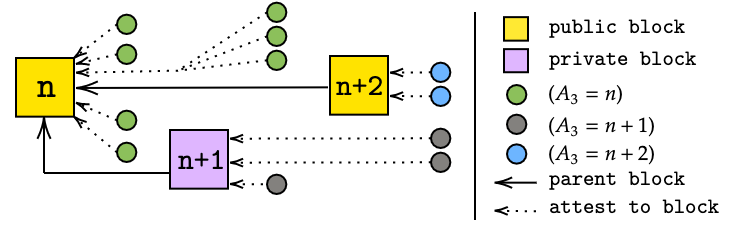}
    \caption{An example of using a 1-block private fork to execute a length-1 malicious reorg. An attacker can attest to its private $n+1$ block using both its slot $n+1$ and $n+2$ attestations. Thus, a minority attacker's attestations on block $n+1$ can outnumber the honest attestations on block $n+2$. Before block $n+1$ is released by the attacker, the $n+2$ block is the head of canonical chain based on HLMD-GHOST. After block $n+1$ is released, the honest nodes see that the weight of block $n+1$ is 3 (the number of grey circles) and the weight of the block $n+2$ is 2 (the number of blue circles). As a result, HLMD-GHOST denotes $n+1$ as the head of the chain because it is the heaviest leaf block that is a child of block $n$, and block $n+2$ becomes an orphan. 
    \label{fig:reorg}}
\end{figure}

\begin{defin}
In a \underline{length-$n$ malicious reorg},  $n$ consecutive blocks that are part of the canonical chain are orphaned as a result of  a reorganization
that is caused by an attacker.
\end{defin}

Note that all attestations in this section refer to $A_3$, or the fork-choice component of the attestation.

\subsection{Strategy}

Consider Figure~\ref{fig:reorg}, in which the attacker (illustrated with the grey attestations and purple blocks) is the block proposer for the $n+1$st slot. The attacker executes a length-$1$ malicious reorg to orphan the $n+2$nd slot block in the following manner:
\begin{enumerate}[noitemsep,topsep=4pt]
    \item The attacker privately creates block $n+1$ and uses its slot $n+1$ attestations to privately attest that block $n+1$ is the head of the chain, that is $(A_3=n+1)$. As the honest attesters in the $n+1$st committee will not see the private block, they instead attest with $(A_3=n)$.
    \item At slot $n+2$, an honest validator will propose a block whose parent is the slot $n$ block because it has not observed block $n+1$. The honest slot $n+2$ committee members attest $(A_3=n+2)$. The attacker uses its slot $n+2$ attestations to privately attest $(A_3=n+1)$.
    \item The attacker then releases block $n+1$ and the private attestations. Since both blocks $n+1$ and $n+2$ are leaf blocks with a common parent, they are conflicting and HLMD-GHOST must choose one. Block $n+1$ has attacker attestations from both slots $n+1$ and $n+2$, while block $n+2$ has attestations from honest validators from slot $n+2$ only (since honest $n+1$ attestations were given to the common parent block $n$). In this example, block $n+1$ has a higher weight than $n+2$ (3 versus 2 respectively), and thus is seen as the head of the chain by HLMD-GHOST. This causes the $n+2$ block to be orphaned.
\end{enumerate}

This attack is possible because during the private fork, the honest attestations of slot $n+1$ will vote for the block at slot $n$, which is an ancestor of block $n+1$ and thus isn't a competing fork.
Figure \ref{fig:reorg} demonstrates a relatively minor reorg, in which only a single honest block is deleted from the chain, but this pattern can be extended to execute longer attacks. See Appendix \ref{appendix:malicious-reorgs} for calculations in regard to the frequency and cost of this malicious reorg attack. 

The left-hand plot of Figure~\ref{fig:prob_reorg} shows numerical approximations to the probability that a malicious reorg is feasible for a 30\% stake attacker as a function of the reorg length $n$, based on a Monte Carlo simulation of $10^7$ randomly generated epochs. While long reorgs are uncommon, it is possible for a 30\% attacker to frequently execute short reorgs. As an example,  a length-$3$ reorg can be executed by a 30\% attacker  once an hour for a total cost of 2 USD per reorg. From the right-hand plot of Figure~\ref{fig:prob_reorg}, we see that the attacks are cheap; a length $n$ reorg costs the attacker approximately $n-1$ USD. 

\begin{figure}
    \centering
    \includegraphics[width=0.8\linewidth]{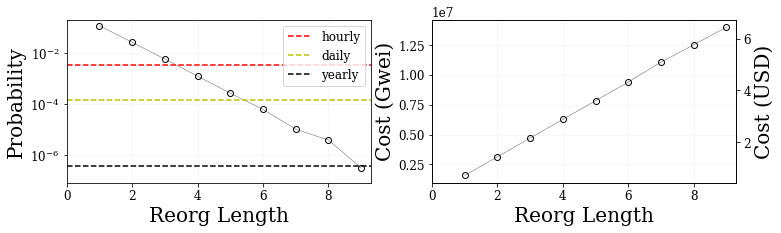}
    \caption{The probability and cost of a malicious reorg as a function of the reorg length for a 30\%-stake attacker. These approximations are from a Monte Carlo simulation of $10^7$ epochs. The cost is shown in Gwei ($10^9$ Gwei = $1$ Eth) as well as USD.
    \label{fig:prob_reorg}}
\end{figure}

\section{Finality Delay}\label{sec:delayfin}

In this section we present a dishonest strategy in which an attacker seeks to delay the finality of future epochs by delaying the release of EBBs.

\begin{defin}
In a \underline{length-$n$ finality delay}, an attacker ensures that none of the next $n$ epochs are finalized on time.
\end{defin}
This attack ensures that over the next $n$ epochs, no new transactions are finalized. This does not mean  that these transactions will never be finalized. Since an epoch being finalized also finalizes any blocks in previous epochs, once the attack  ends, a future epoch  becomes finalized (likely the second after the attack ends) and all the transactions sent during the attack will be as well. This is why we refer to the attack as a finality \textit{delay} rather than a permanent liveness attack on Casper FFG.

Note that attestations in this section may refer to $A_1, A_2$ (Casper FFG vote), or $A_3$ (HLMD-GHOST vote), and thus we will specify this as necessary.

\begin{figure}
    \centering
    \includegraphics[width=0.75\linewidth]{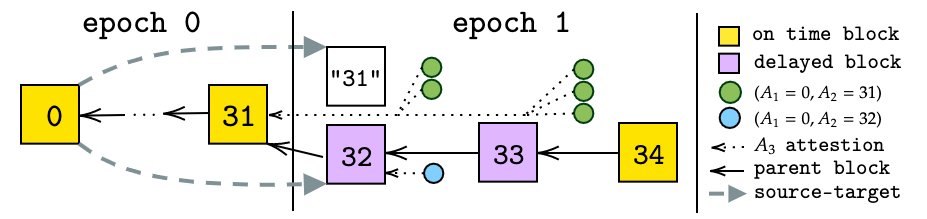}
    \caption{An attacker who makes use of a  delayed release of the slot 32 and slot 33 blocks to prevent epoch 1 from being justified. The 5 honest attestations (green circles) over the course of slot 32 and 33 will identify block 31 as the EBB of epoch 1. Once the attacker releases blocks 32 and 33, the honest validators will recognize block 32 as the EBB, but with the attacker withholding their remaining attestations a supermajority link with $(A_1=0, A_2=32)$ is not formed and epoch 1 is never justified. 
    \label{fig:fin}}
\end{figure}
\subsection{Strategy}
 
An attacker can delay finality by being the proposer of the EBB and the subsequent block (i.e., block 32, the EBB of epoch 1, and block 33 in Figure~\ref{fig:fin}), and creating a private fork until there is enough attestations from other validators that a prior epoch block is the target EBB (Definition~\ref{def:ebb}). In Figure~\ref{fig:fin}, the other validators make attestations with $(A_1=0,A_2=31)$ where block 31 is the target EBB.
When the private fork is revealed by the attacker, block 32 is recognized as the start of  epoch 1 but because Casper FFG votes for block 31 have been incorrectly cast and the attacker withheld their own attestations, block 32 cannot gain sufficient attestations. No supermajority link between either $0$ and $32$ or $0$ and $31$ is formed, and  epoch 1 is not justified. 

In the context of Figure~\ref{fig:fin}, this attack unfolds as follows: 
\begin{enumerate}[noitemsep,topsep=4pt]
    \item Consider an attacker who is the proposer for block 32 (the EBB of epoch 1) as well as for block 33. After block 31 is published, the attacker creates block 32, a single attestation $(A_1=0,A_2=32,A_3=32)$, and block 33. These are all kept private until step 3.
    \item Since the honest validators do not see the EBB (block 32), they attest with $(A_1=0,A_2=31,A_3=31)$ that the last published block, 31, is the highest slot block of the previous epoch, and thus the new EBB (by Definition~\ref{def:ebb}). In order to succeed in the attack, a 30\% attacker waits until 139 (about 3.3\% of the total number of attestations in the epoch) honest attestations are published that identify 31 as the FFG target block (see Appendix \ref{appendix:finality-delay}).
    \item The attacker  releases the private fork, which is  identified as the head by HLMD-GHOST  because 31 is the predecessor of block 32 and therefore the honest attestations for $(A_3=31)$ do not conflict with the attacker's attestation to $(A_3=32)$. The effect is that the  honest attestations are wasted on the wrong EBB for epoch 1.  Since 33 is the only leaf block, it is seen as the head of the chain.
    \item The attacker withholds the rest of their attestations in epoch 1 that would otherwise indicate that 32 is the correct EBB. No supermajority link with $(A_1=0,A_2=32)$ or $(A_1=0,A_2=31)$ is created, and epoch 1 is not justified.
\end{enumerate}

See Appendix \ref{appendix:finality-delay} for calculations of the cost of this attack and an example calculation of how many attestations a 30\% stake attacker needs to ``waste'' (cause to incorrectly identify the target EBB, $A_2$) for the attack to succeed.
From Figure~\ref{fig:fin}, the 30\% attacker has a $(0.3)^2 = 0.09$ probability of ensuring a non-justified epoch. This is the probability that the attacker is the block proposer for the EBB (block 32) as well as the subsequent block.
Since finality occurs when two consecutive epochs are justified, the attacker can  delay finality for as long as they can ensure no two epochs in a row become  justified.

\subsection{Probability of Finality Delays}
The probability of a length-$n$ delay in finality is equal to the  probability that there are no two heads in a row in a sequence of $n$ flips of a biased coin with $P(H) = 0.91$ and $P(T) = 0.09$. We calculate this probability through enumeration.
%
The left-hand plot of Figure~\ref{fig:prob_fin} shows this probability for different values of $n$; a 30\% attacker can delay finality with high frequency. The right-hand plot of Figure~\ref{fig:prob_fin} gives the cost of the attacks, which ranges from 100 to 1200 USD.
\begin{figure}
    \centering
    \includegraphics[width=0.8\linewidth]{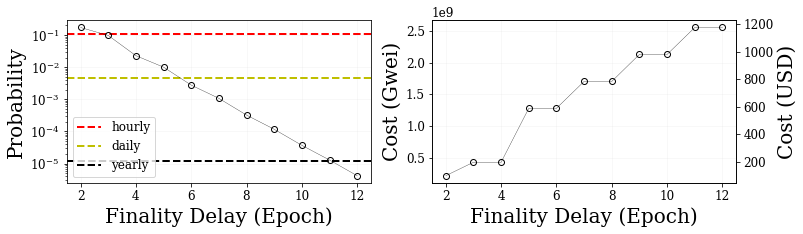}
    \caption{The probability  with which a 30\% attacker can delay finality for $n$ epochs, and the associated cost. Hourly, daily, and yearly rates are delineated by horizontal lines. See Appendix~\ref{appendix:fincost} for the calculation of the cost.}
    \label{fig:prob_fin}
\end{figure}

\section{Conclusion}

We have presented two dishonest strategies that a 30\% stake attacker on the Ethereum 2.0 beacon chain can launch with high probability and low cost. 
Though the focus of this paper is on introducing these two strategies and studying them for a 30\% attacker, it is also worth considering the attack possibilities for participants with smaller stakes.
In both cases, the likelihood of an attack succeeding decreases  with the available stake of an attacker. For example, we can show that a 25\% attacker can execute a length-2 reorg once per 3 hours and a 2 epoch finality delay once per hour.
%
%
An interesting avenue for future work is studying how these probabilities change as a function of stake. 
Further, it would be useful to quantify how much an attacker could expect to earn by executing a malicious reorg and how much a  finality delay attack can impact 
the average transaction finalization times.

\section*{Acknowledgements}

The authors would like to thank an anonymous reviewer for useful comments as we prepared this manuscript for the Game Theory in Blockchain Workshop (GTiB) at the 2020 conference on Web and Internet Economics (WINE 2020). This work is supported in part by two generous gifts to the Center for Research on Computation and Society at Harvard University, both to support research on applied cryptography and society.

\bibliography{refs}

\begin{thebibliography}{20}
\providecommand{\natexlab}[1]{#1}
\providecommand{\url}[1]{\texttt{#1}}
\expandafter\ifx\csname urlstyle\endcsname\relax
  \providecommand{\doi}[1]{doi: #1}\else
  \providecommand{\doi}{doi: \begingroup \urlstyle{rm}\Url}\fi

\bibitem[Buterin et~al.(2020)Buterin, Hernandez, Kamphefner, Pham, Qiao, Ryan,
  Sin, Wang, and Zhang]{gasper}
Vitalik Buterin, Diego Hernandez, Thor Kamphefner, Khiem Pham, Zhi Qiao, Danny
  Ryan, Juhyeok Sin, Ying Wang, and Yan~X Zhang.
\newblock Combining ghost and casper, 2020.

\bibitem[{Ethereum 2.0 Spec}(2020)]{eth2spec}
{Ethereum 2.0 Spec}.
\newblock Ethereum 2.0 phase 0 -- the beacon chain, 2020.
\newblock URL
  \url{https://github.com/ethereum/eth2.0-specs/blob/dev/specs/phase0/beacon-chain.md}.

\bibitem[Buterin and Griffith(2017)]{buterin2017casper}
Vitalik Buterin and Virgil Griffith.
\newblock Casper the friendly finality gadget.
\newblock \emph{arXiv preprint arXiv:1710.09437}, 2017.

\bibitem[Sompolinsky and Zohar(2015)]{ghost}
Yonatan Sompolinsky and Aviv Zohar.
\newblock Secure high-rate transaction processing in bitcoin.
\newblock \emph{Conference on Financial Cryptography and Data Security}, pages
  507--527, 2015.

\bibitem[Castro et~al.(1999)Castro, Liskov, et~al.]{castro1999practical}
Miguel Castro, Barbara Liskov, et~al.
\newblock Practical byzantine fault tolerance.
\newblock In \emph{OSDI}, volume~99, pages 173--186, 1999.

\bibitem[Nakamoto(2008)]{nakamoto2008bitcoin}
Satoshi Nakamoto.
\newblock Bitcoin: A peer-to-peer electronic cash system, 2008.
\newblock URL \url{https://bitcoin.org/bitcoin.pdf}.

\bibitem[Daian et~al.(2019)Daian, Goldfeder, Kell, Li, Zhao, Bentov,
  Breidenbach, and Juels]{daian2019flash}
Philip Daian, Steven Goldfeder, Tyler Kell, Yunqi Li, Xueyuan Zhao, Iddo
  Bentov, Lorenz Breidenbach, and Ari Juels.
\newblock Flash boys 2.0: Frontrunning, transaction reordering, and consensus
  instability in decentralized exchanges.
\newblock \emph{arXiv preprint arXiv:1904.05234}, 2019.

\bibitem[Buterin(2021)]{hf1}
Vitalik Buterin.
\newblock Hf1 proposal, 2021.
\newblock URL \url{https://notes.ethereum.org/@vbuterin/HF1_proposal}.

\bibitem[Eyal and Sirer(2018)]{eyal2018majority}
Ittay Eyal and Emin~G{\"u}n Sirer.
\newblock Majority is not enough: Bitcoin mining is vulnerable.
\newblock \emph{Communications of the ACM}, 61\penalty0 (7):\penalty0 95--102,
  2018.

\bibitem[Nayak et~al.(2016)Nayak, Kumar, Miller, and Shi]{nayak2016stubborn}
Kartik Nayak, Srijan Kumar, Andrew Miller, and Elaine Shi.
\newblock Stubborn mining: Generalizing selfish mining and combining with an
  eclipse attack.
\newblock In \emph{2016 IEEE European Symposium on Security and Privacy
  (EuroS\&P)}, pages 305--320. IEEE, 2016.

\bibitem[Kwon et~al.(2017)Kwon, Kim, Son, Vasserman, and Kim]{kwon2017selfish}
Yujin Kwon, Dohyun Kim, Yunmok Son, Eugene Vasserman, and Yongdae Kim.
\newblock Be selfish and avoid dilemmas: Fork after withholding (faw) attacks
  on bitcoin.
\newblock In \emph{Proceedings of the 2017 ACM SIGSAC Conference on Computer
  and Communications Security}, pages 195--209. ACM, 2017.

\bibitem[Sapirshtein et~al.(2016)Sapirshtein, Sompolinsky, and
  Zohar]{sapirshtein2016optimal}
Ayelet Sapirshtein, Yonatan Sompolinsky, and Aviv Zohar.
\newblock Optimal selfish mining strategies in bitcoin.
\newblock In \emph{International Conference on Financial Cryptography and Data
  Security}, pages 515--532. Springer, 2016.

\bibitem[Brown-Cohen et~al.(2019)Brown-Cohen, Narayanan, Psomas, and
  Weinberg]{brown2019formal}
Jonah Brown-Cohen, Arvind Narayanan, Alexandros Psomas, and S~Matthew Weinberg.
\newblock Formal barriers to longest-chain proof-of-stake protocols.
\newblock In \emph{Proceedings of the 2019 ACM Conference on Economics and
  Computation}, pages 459--473. ACM, 2019.

\bibitem[Neuder et~al.(2019)Neuder, Moroz, Rao, and Parkes]{neuder2019selfish}
Michael Neuder, Daniel~J Moroz, Rithvik Rao, and David~C Parkes.
\newblock Selfish behavior in the tezos proof-of-stake protocol.
\newblock \emph{arXiv preprint arXiv:1912.02954}, 2019.

\bibitem[Neuder et~al.(2020)Neuder, Moroz, Rao, and
  Parkes]{neuder2020defending}
Michael Neuder, Daniel~J Moroz, Rithvik Rao, and David~C Parkes.
\newblock Defending against malicious reorgs in tezos proof-of-stake.
\newblock In \emph{Proceedings of the 2nd ACM Conference on Advances in
  Financial Technologies}, pages 46--58, 2020.

\bibitem[Neu et~al.(2020)Neu, Tas, and Tse]{neu2020ebb}
Joachim Neu, Ertem~Nusret Tas, and David Tse.
\newblock Ebb-and-flow protocols: A resolution of the availability-finality
  dilemma.
\newblock \emph{arXiv preprint arXiv:2009.04987}, 2020.

\bibitem[Nakamura(2019{\natexlab{a}})]{decoyflipflip}
Ryuya Nakamura.
\newblock Decoy-flip-flop attack on lmd ghost, 2019{\natexlab{a}}.
\newblock URL
  \url{https://ethresear.ch/t/decoy-flip-flop-attack-on-lmd-ghost/6001}.

\bibitem[Pull Request~1466(2019)]{pullreq-decoy-eth}
Eth 2.0~Spec Pull Request~1466.
\newblock Bounce attack resistance, 2019.
\newblock URL \url{https://github.com/ethereum/eth2.0-specs/pull/1466}.

\bibitem[Nakamura(2019{\natexlab{b}})]{bouncingattack}
Ryuya Nakamura.
\newblock Analysis of bouncing attack on ffg, 2019{\natexlab{b}}.
\newblock URL
  \url{https://ethresear.ch/t/analysis-of-bouncing-attack-on-ffg/6113}.

\bibitem[Pull Request~1465(2019)]{pullreq-bounce-eth}
Eth 2.0~Spec Pull Request~1465.
\newblock Bounce attack resistance, 2019.
\newblock URL \url{https://github.com/ethereum/eth2.0-specs/pull/1465}.

\end{thebibliography}

\appendix

\section{Calculating Probability and Cost of Malicious Reorgs}
\label{appendix:malicious-reorgs}

\subsection{Probability}
We consider the probability with which a 30\% attacker can execute a length-$n$ reorg within an epoch. Probabilistically, this can be phrased as follows. 

\begin{quote}
    \emph{Consider an urn with $128 \cdot 32 = 4096$ balls\footnote{128 represents the number of validators per committee, and 32 is the number of slots in each epoch.}, 30\% of which are red (the attacker) and the remainder are black. The balls are drawn in groups of $128$ without replacement until the urn is empty. What is the probability that there is a sequence of $m + n$ draws of 128, over which there are more red balls drawn in total than black balls drawn in the final $n$ draws of the sequence, for any $m$ where the first ball in each of the $m$ draws is red.} 
\end{quote}
Let $m$ be the length of the attacker fork, where each of the private blocks must be in consecutive slots, and $n$ be the number of honest blocks that will be orphaned. Intuitively, we care about the relationship between the number of attacker attestations over $m+n$ slots (grey circles in Figure~\ref{fig:reorg}) versus the number of honest attestations over the last $n$ of those slots (blue circles in Figure~\ref{fig:reorg}), because the honest attestations during the attacker fork (seen as green circles in Figure~\ref{fig:reorg}) will contribute to the weight of an ancestor to both conflicting forks, and thus won't impact the leaf block HLMD-GHOST count. We approximate this probability with Monte Carlo simulations.

\subsection{Cost}
All constants are shown in this \code{FONT} and are copied from the \cite{eth2spec}. Validators are rewarded for both creating attestations and including attestations in the blocks they create. Notice that in Figure~\ref{fig:reorg}, the attacker is still able to include the honest attestations heard over the private fork, because they attest to the predecessor block ``n''. Thus we only consider the loss in rewards to the attackers attestations.
Assume there are $32 \cdot 128$ validators who each have the \code{MAX\_EFFECTIVE\_BALANCE} of $32 \times 10^9$ Gwei $= 32$ Eth. 
\begin{defin}
The base reward (in Gwei), denoted $\rho$, for each validator is,
\begin{align}
    \rho &= \frac{\code{BASE\_REWARD\_FACTOR}}{\code{BASE\_REWARDS\_PER\_EPOCH}} \cdot \frac{32 \times 10^9}{\lfloor \sqrt{32 \cdot 128 \cdot 32 \times 10^9} \rfloor} \\ 
    &= 2^4 \cdot \frac{32 \times 10^9}{11448668} \approx 44721.
\end{align}
\end{defin}
\begin{defin}
The inclusion reward, denoted $\iota$, as a function of the inclusion delay (difference between the slot that the attestation is created for and the slot at which that attestation is included), $d$, for each attestation is,
\begin{align}
    \iota(d) &= \frac{1}{d} \left(\rho - \frac{\rho}{\code{PROPOSER\_REWARD\_QUOTIENT}} \right)\\ 
    &= \frac{\rho}{d} \left(1 - \frac{1}{2^3}\right) = \frac{7 \rho}{8d}.
\end{align}
\end{defin}

\begin{defin}
The max value, $\nu$, of a single attestation is,
\begin{align}
    \nu &= 3 \rho + \iota(1) \\
    &= \frac{31}{8} \cdot 44721 \approx 173294.
\end{align}
The attester is rewarded for being included, having the correct source and destination epoch boundary blocks, and pointing to the correct head of the chain. 
\end{defin}

When the attacker is following the reorg policy in Section~\ref{sec:malreorgs}, they only miss out on the inclusion reward for their attestations not being included immediately (because they will correctly identify source and target epoch boundaries as well as the head of the chain). 
Since the attacker attestations for blocks \textit{after the end of the private fork}\footnote{Attacker attestations during the private fork will be included by subsequent attacker blocks.} may not be included for some time (due to the fact that \code{MAX\_ATTESTATIONS}=128), we will assume that those attestations \textit{will eventually be included},\footnote{They will be included because the honest nodes will be incentivized to include as many attestations as possible due to the proposer rewards.} but that the inclusion reward will be zero because the value of $d$ will be high. Thus the cost of a length-$n$ malicious reorg is $k\cdot7\rho/8$, where $k$ is the number of attestations that the attacker creates after the end of the private fork.\footnote{Even though the attacker attestations may not immediately be included in the chain, they will still count towards the value of the LMD-GHOST fork choice.} Figure~\ref{fig:prob_reorg} demonstrates the low costs of reorg attacks.

\section{Finality Delay}
\label{appendix:finality-delay}

\subsection{Example of Finality Delay with 30\% Attacker}
Consider an attacker with 30\% of the total stake. In order to ensure a specific epoch is not justified, the attacker must ensure the $2/3$ threshold for justification isn't met. Given 32 slots per epoch and 128 validators per committee, there are 4096 attestations for each epoch. This implies that in order for an epoch to become justified, at least $\lceil 4096 \cdot 2/3\rceil = 2731$ attestations must identify the correct epoch boundary block, and thus the attacker needs to ensure that at least 1366 of the attestations are incorrect. The 30\% attacker will withhold their own $\lfloor 4096 \cdot 0.3\rfloor = 1228$ attestations, and thus needs to ensure $1366 - 1228 = 138$ honest attestations are incorrect.\footnote{We use 139 in the main text because the attacker creates a single attestation to their own EBB.} Because there are $128$ attestations per slot, this corresponds to the attacker needing at least 2 slots at the beginning of the epoch.

\subsection{Calculating Cost of Finality Delay}\label{appendix:fincost}
We now directly calculate the cost for each finality delay attack. Again we consider the case where there are $128 \cdot 32$ validators who each have the \code{MAX\_EFFECTIVE\_BALANCE} $=32 \times 10^9$ Gwei. Recall that in this case the max value, $\nu$, of a single attestation is $173294$ Gwei. When the network has not been finalized for 4 epochs, the system enters an inactivity leak state, where it remains for each subsequent epoch until finality  
\begin{defin}
The inactivity leak penalty, denoted $\lambda$, as a function of the number of epochs since finality, $e_f$, for $e_f >4$ is,
\begin{align}
    \lambda(e_f) &= e_f\cdot \frac{\code{MAX\_EFFECTIVE\_BALANCE}}{\code{INACTIVITY\_PENALTY\_QUOTIENT}}\\ 
    &= e_f \cdot \frac{32\times10^{9}}{2^{24}} \approx 1907.35e_f. 
\end{align}
\end{defin}
In addition to the inactivity leak penalty, the attacker will incur a penalty of $\nu$ for each missing attestation.\footnote{This penalty is in place to ensure optimal behavior is only giving a neutral balance.} Thus for each attestation that the attacker owns, the cost is $2\nu$, because if they were playing the honest strategy they would earn  $1\cdot\nu$, and now each attestation missing will be worth $-1\cdot\nu$ 

\begin{lemma}
The cost of a length-$n$ delay in finality, $C(n)$, for a 30\% attacker is, 
\begin{align}
    C(n) &= \begin{cases}
    \lceil n / 2 \rceil \cdot 1228 \cdot \nu & \text{if } n \leq 4 \\
    \lceil n / 2 \rceil \cdot 1228 \cdot 2\nu + \lambda(n) & \text{if } n > 4 \\
    \end{cases}
\end{align}
\end{lemma}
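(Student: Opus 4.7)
The plan is to decompose the cost of a length-$n$ finality delay into two contributions: the opportunity cost and explicit penalty from the attacker withholding their own attestations, and (when applicable) the validator-level inactivity leak. I would handle these in three steps.

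First I would pin down how many epochs the attacker must actively suppress. By Definition~\ref{defin:finalized}, finality requires two consecutive justified EBBs, so delaying finality for $n$ epochs is equivalent to ensuring that no two consecutive epochs in the window are both justified. The minimum-cost strategy is to attack every other epoch, which requires exactly $\lceil n/2 \rceil$ attacked epochs. This bound is tight: a short pigeonhole argument shows that if strictly fewer epochs are attacked, two consecutive unattacked epochs exist and could both be justified. I would assume throughout that the attacker has in fact been selected as the EBB proposer in each chosen attack epoch; this probabilistic condition is handled separately in the preceding section.

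Second I would compute the per-epoch cost of withholding. In each attacked epoch, the attacker withholds all $\lfloor 4096 \cdot 0.3 \rfloor = 1228$ of their attestations. Each withheld attestation costs $\nu$ in foregone reward, since the honest play would identify the correct source EBB, target EBB, and head, and be included with $d=1$. When $n \leq 4$ the inactivity leak is inactive and this is the only cost, giving
\begin{equation*}
    C(n) = \lceil n/2 \rceil \cdot 1228 \cdot \nu.
\end{equation*}
When $n > 4$ the inactivity leak turns on and, following the argument in the text immediately preceding the lemma, each missing attestation incurs an additional explicit penalty of $\nu$. The per-attestation swing relative to honest play therefore doubles to $2\nu$, and on top of this the validator-level leak $\lambda(n)$ is levied. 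Summing yields
\begin{equation*}
    C(n) = \lceil n/2 \rceil \cdot 1228 \cdot 2\nu + \lambda(n).
\end{equation*}

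The hard part will be carefully verifying the reward and penalty accounting. In particular, I need to make sure that bundling base rewards plus inclusion reward into a single $\nu$ is legitimate for the withheld attestations (i.e.\ that no partial reward, e.g.\ for a correct source vote, is still claimed in an attacked epoch), and that the $-\nu$ missing-attestation penalty really activates only in the inactivity-leak regime. I also need to treat $\lambda(n)$ consistently with how it is defined: the expression above implicitly assumes that $\lambda(n)$ already encodes the leak contribution appropriately aggregated over the relevant epochs and validators. A useful sanity check would be to verify the formula is continuous at the boundary $n=4$ (where the leak has just begun to apply) and to cross-reference the resulting USD magnitudes with those reported in Figure~\ref{fig:prob_fin}.
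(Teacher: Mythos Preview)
Your proposal is correct and follows essentially the same argument as the paper: justify the $\lceil n/2\rceil$ factor via the ``no two consecutive justified epochs'' requirement, account $1228\cdot\nu$ per attacked epoch from withheld attestations, and for $n>4$ double the per-attestation swing to $2\nu$ while adding $\lambda(n)$. Your extra rigor (the pigeonhole tightness argument and the accounting checks) goes beyond what the paper actually proves, though note that the continuity sanity check at $n=4$ will not hold---the formula has a genuine jump there by construction, since the inactivity-leak penalty and the extra $-\nu$ per attestation switch on discretely.
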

\begin{proof}
Each time the attacker denies the justification of an epoch, they miss out on all 1228 of their attestation rewards, which are each worth $\nu$. In the case where $n\leq 4$, no inactivity leak has begun, and the attacker only is losing their attestation rewards. In order to delay finality for $n$ epochs, they must at least ensure that every other epoch is \textit{not} justified, thus they must be paying the $1228\nu$ penalty $\lceil n/2 \rceil$ times. When $n>4$, the inactivity penalty kicks in and the attacker loses an additional $\lambda(n)$ and is also penalized $\nu$ for each attestation, making the total cost for their missing attestations over the course of an epoch $1228 \cdot 2\nu$.
\end{proof}

\end{document}